\newcommand{\ud}{\mathrm{d}}
\newcommand{\ui}{\mathrm{i}}
\newcommand{\ue}{\mathrm{e}}
\newcommand{\rz}{{\mathbb R}}
\newcommand{\nz}{{\mathbb N}}
\newcommand{\kz}{{\mathbb C}}
\newcommand{\Th}{\Theta}
\newcommand{\The}{\Theta}
\newcommand{\dz}{d}
\newcommand{\Thd}{\Theta_{\Delta^{\frac{1}{2}}}}
\newcommand{\be}{\begin{eqnarray}}
\newcommand{\ee}{\end{eqnarray}}
\newcommand{\eq}{\eqalign}
\newcommand{\Tho}{\The^\mathrm{Osc.}}
\newcommand{\Thw}{\The^W}
\newcommand{\al}{\alpha}
\newcommand{\hata}{\tilde}
\newcommand{\lf}{\mathfrak{l}}
\newcommand{\lfp}{\lf_{p,n}}
\DeclareMathOperator{\re}{Re}
\DeclareMathOperator{\orr}{o}
\DeclareMathOperator{\Ei}{Ei}
\DeclareMathOperator{\Ee}{E_1}
\DeclareMathOperator{\Es}{E^{\ast}}
\DeclareMathOperator{\Dz}{D}
\DeclareMathOperator{\Lz}{L}
\newtheorem{theorem}{Theorem}[section]
\newtheorem{cor}[theorem]{Corollary}
\newtheorem{defn}[theorem]{Definition}
\begin{document}
\title[A new proof of the Vorono\"i summation formula]{A new proof of the Vorono\"i summation formula}
\author{Sebastian Egger né Endres}
\address{Institut f{\"u}r Theoretische Physik, Universit{\"a}t Ulm,\newline Albert-Einstein-Allee 11, 89081 Ulm, Germany}
\ead{sebastian.endres@uni-ulm.de}
\author{Frank Steiner}
\address{Institut f{\"u}r Theoretische Physik, Universit{\"a}t Ulm,\newline Albert-Einstein-Allee 11, 89081 Ulm, Germany \\ \vspace{2mm} and  \\ \vspace{2mm}
Centre de Recherche Astrophysique de Lyon, Universit\'{e} Lyon 1, CNRS UMR 5574, Observatoire de Lyon, 9 avenue Charles Andr\'{e}, 69230 Saint-Genis Laval, France}
\ead{frank.steiner@uni-ulm.de}
\begin{abstract}
We present a short alternative proof of the Vorono\"i summation formula which plays an important role in Dirichlet's divisor problem and has recently found an application in physics as a trace formula for a Schrödinger operator on a non-compact quantum graph $\mathfrak{G}$ [S. Egger né Endres and F. Steiner, J. Phys.\,A: Math. Theor. $\boldsymbol{44}$ ($2011$) $185202$ ($44$pp)]. As a byproduct we give a new proof of a non-trivial identity for a particular Lambert series which involves the divisor function $d(n)$ and is identical with the trace of the Euclidean wave group of the Laplacian on the infinite graph $\mathfrak{G}$.
\end{abstract}
\pacs{03.65.Ca, 03.65.Db}
\maketitle

\section{Introduction: The Vorono\"i summation formula and its connection to an infinite quantum graph}
\label{1}

The Vorono\"i summation formula provides a connection between a sum involving the divisor function $d(n)$ weighted with a given function $f(n)$ and a sum involving again the divisor function weighted with a function $F(n)$ which arises from a certain Bessel transformation of the function $f$.  The divisor function is defined as the number of divisors of $n$, unity and $n$ itself included, i.e. 
\begin{equation}
\label{8}
\dz(k):=\#\left\{(n,m);\quad nm=k\right\}, \quad n,m,k\in\nz.
\end{equation}
Note that the divisor function $d(n)$, $d(1)=1$, $d(2)=2$, $d(3)=2$, $d(4)=3$, $d(5)=2$, $d(6)=4$, $\ldots$, with $d(p)=2$ for $p$ prime, is a very irregular function with asymptotic behaviour 
\begin{equation}
\label{9a}
d(n)=\Or\left(n^{\epsilon}\right),\quad n\rightarrow\infty \quad \mbox{for all} \quad \epsilon>0.
\end{equation}
In 1849 Dirichlet proved \cite{Dirichlet:1849} the following  asymptotic formula 
\be
\label{9aa}
D(x):=\sum\limits_{n=1}^{x}d(n)=x\ln x+(2\gamma-1)x+\Delta(x),\quad x\rightarrow\infty
\ee
with
\be
\label{9ab}
\Delta(x)=\Or\left(\sqrt{x}\right),\quad x\rightarrow\infty,
\ee
where $\gamma$ is Euler's constant. The famous Dirichlet's divisor problem is that of determining as precisely as possible the maximum order of the error term $\Delta(x)$. In 1903 Vorono\"i \cite{Voronoi:1903} was able to improve Dirichlet's result by proving 
\be
\label{9ac}
\Delta(x)=\Or\left(x^{\frac{1}{3}}\ln x\right),\quad x\rightarrow\infty.
\ee
Vorono\"i based his prove on a new summation formula carrying now its name, see \cite{Voronoi:1904}. The estimate on $\Delta(x)$ was later improved, see e.g. \cite{Vandercorpu1922t:,Huxley:2003}. In \cite{Hardy:1917} it was proved that
\be
\label{9ad}
\Delta(x)=\Or\left(x^{\al}\right),\quad x\rightarrow\infty\quad \mbox{with}\quad \al>\frac{1}{4},
\ee
and it is not unlikely that (Hardy's conjecture)
\be
\label{9aba}
\Delta(x)=\Or\left(x^{\frac{1}{4}+\epsilon}\right),\quad x\rightarrow\infty,
\ee
for all positive values of $\epsilon$, but the exact order of $\Delta(x)$ is still unknown. In various articles (e.g. \cite{Rogosinski:1922,Oppenheim:1927,Landau:1927,Koshliakov:1928,Dixon:1931,Wilton:1932,Ferrar:1935,Ferrar:1937,Dixon:1937,Chandrasekharan:1968,Berndt:1972,Hejhal:1979}) the authors investigate the Dirichlet divisor problem respectively the Vorono\"i summation formula and specify proper function spaces for which the Vorono\"i summation formula is valid. See also \cite{Ferrar:1935,Ferrar:1937} for a generalization to a broader class of summation formulae. The result of Hejhal \cite{Hejhal:1979} states:
\pagebreak
\begin{theorem}[Hejhal:1979, \cite{Hejhal:1979}]
\label{13a}
If $f$ is two times continuously differentiable and possesses compact support ($f\in C_0^2(\rz)$) then the following formula is valid
\begin{eqnarray}
\label{z2}
%
\sum\limits_{n=1}^{\infty}d(n) f(n) &=&\int\limits_0^{\infty}\left(\ln k+2\gamma\right)f(k)\ud k+\frac{f(0)}{4}\nonumber\\ 
&\phantom{=}& \hspace{-0.9cm}+2\pi\sum\limits_{n=1}^{\infty}d(n)\int\limits_0^{\infty}\left[\frac{2}{\pi}K_0\left(4\pi\sqrt{nk}\right)-Y_0\left(4\pi\sqrt{nk}\right)\right]f(k)\ud k, \label{13}
\end{eqnarray}
where the functions $K_0(x)$ and $Y_0(x)$ are Bessel functions, i.e. the McDonald function respectively the Neumann function (see e.g. $\cite[pp.\,65,66]{Magnus:1966})$ and $\gamma$ denotes the Euler constant.
\end{theorem}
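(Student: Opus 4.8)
The plan is to obtain the Vorono\"i formula \eqref{13} as a by-product of the Poisson summation formula applied to an auxiliary function built from $f$. First I would recall the classical fact that $d(n)$ is generated by $\zeta(s)^2$ and, equivalently, that $\sum_{n=1}^\infty d(n) g(n)$ can be rewritten as a double sum $\sum_{n,m=1}^\infty g(nm)$ over the hyperbola $nm=k$. The idea is then to lay the lattice points $(n,m)\in\nz^2$ on a grid and convert the double sum into a two-dimensional Poisson summation. Concretely I would introduce the radial function on $\rz^2$ whose value at $(x,y)$ depends only on the product $xy$ (after a hyperbolic change of variables $u=\ln x, v=\ln y$ the product $xy=\ue^{u+v}$, so the relevant structure is a one-dimensional Poisson summation in the variable $u+v$), and keep careful track of the contributions of the coordinate axes and of the point at the origin — these will produce the ``diagonal'' main terms $\int_0^\infty(\ln k+2\gamma)f(k)\,\ud k$ and $f(0)/4$ in \eqref{13}.

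The key analytic input is the Hankel/Bessel transform: the Fourier transform (in the appropriate hyperbolic coordinates) of the kernel $\tfrac{1}{k}$, regularized, produces precisely the combination $\tfrac{2}{\pi}K_0(4\pi\sqrt{nk})-Y_0(4\pi\sqrt{nk})$. I would therefore (i) start from $\sum_{n=1}^\infty d(n) f(n)=\sum_{n,m}f(nm)$, valid by absolute convergence since $f\in C_0^2(\rz)$ has compact support so only finitely many terms are nonzero; (ii) write $f(nm)$ using the Mellin transform $\tilde f(s)=\int_0^\infty f(k)k^{s-1}\,\ud k$, so that $\sum_{n,m}f(nm)=\frac{1}{2\pi\ui}\int_{(c)}\zeta(s)^2\tilde f(s)\,\ud s$ for $c>1$; (iii) shift the contour to the left, picking up the double pole of $\zeta(s)^2$ at $s=1$ (which gives the $\ln k+2\gamma$ term via $\frac{\ud}{\ud s}\tilde f(s)$ and the Laurent expansion $\zeta(s)^2=(s-1)^{-2}+2\gamma(s-1)^{-1}+\cdots$) and the value at $s=0$ (which gives $f(0)/4$ since $\zeta(0)^2=1/4$); (iv) on the shifted line apply the functional equation $\zeta(s)=\chi(s)\zeta(1-s)$ with $\chi(s)=2^s\pi^{s-1}\sin(\pi s/2)\Gamma(1-s)$, so that $\zeta(s)^2\tilde f(s)=\chi(s)^2\zeta(1-s)^2\tilde f(s)$, expand $\zeta(1-s)^2=\sum_n d(n) n^{s-1}$, and recognize the remaining $s$-integral $\frac{1}{2\pi\ui}\int \chi(s)^2 n^{s-1}\tilde f(s)\,\ud s$ as exactly $2\pi\int_0^\infty[\tfrac{2}{\pi}K_0(4\pi\sqrt{nk})-Y_0(4\pi\sqrt{nk})]f(k)\,\ud k$, using the known Mellin–Barnes representations of $K_0$ and $Y_0$ (cf. \cite[pp.\,65,66]{Magnus:1966}).

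The main obstacle will be justifying the contour shift and the interchange of the sum $\sum_n d(n) n^{s-1}$ with the $s$-integral on the shifted line: $\zeta(s)^2$ grows polynomially in $|\Im s|$ in the critical strip (of order $|t|^{1/2}$ times logs on the central line after the functional equation), while the Mellin transform $\tilde f(s)$ of a merely $C^2_0$ function decays only like $|t|^{-2}$, which is just barely enough — one must integrate by parts twice using compact support, then estimate $\chi(s)^2\sim |t|^{2(1/2-\sigma)}$ uniformly and use $d(n)=\Orr(n^\epsilon)$ from \eqref{9a} to get absolute convergence of the resulting double integral/sum. I would handle this by first proving the identity for $f$ in a dense subclass (say $C_0^\infty$) where all manipulations are trivially licit, and then extending to $C_0^2(\rz)$ by a density/approximation argument, checking that both sides of \eqref{13} are continuous in the $C^2_0$ topology — the left side because only finitely many $d(n)f(n)$ are involved and these are uniformly controlled, the right side because the Bessel kernel is integrable against $f$ and, after the two integrations by parts, the tail sum over $n$ converges uniformly. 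A secondary technical point is the precise bookkeeping of the axis contributions in the geometric picture versus the pole contributions in the Mellin picture; I expect the Mellin–Barnes route to be cleaner and would present that as the main line of argument, relegating the lattice-point heuristic to a remark.
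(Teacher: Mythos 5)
Theorem~\ref{13a} is not actually proved in this paper: it is cited from Hejhal \cite{Hejhal:1979} purely as context. The paper's own ``new proof'' establishes Theorem~\ref{147}, a Vorono\"i formula on a \emph{different} test-function space $\widetilde{\mathfrak{F}}$, namely Laplace transforms of regulated, compactly supported $L$-functions with the asymptotics \eref{135}; the authors show in \eref{b7} that $\widetilde{\mathfrak{F}}\cap C_0^2(\rz)$ contains only the zero function, so the paper's theorem and Hejhal's are for essentially disjoint classes. Your proposal aims at the $C_0^2(\rz)$ statement directly, i.e. at Hejhal's theorem, not at the version the paper proves.

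Your route is also entirely different from the paper's. You propose the classical Mellin-transform argument: write $\sum_n d(n)f(n)$ as $\frac{1}{2\pi\ui}\int_{(c)}\zeta(s)^2\tilde f(s)\,\ud s$, shift the contour picking up the double pole of $\zeta(s)^2$ at $s=1$ and the pole of $\tilde f$ at $s=0$, apply the functional equation $\zeta(s)=\chi(s)\zeta(1-s)$, and identify the remaining integral with the Bessel transform via Mellin--Barnes representations of $K_0$ and $Y_0$. This is essentially Ferrar's method, which the paper singles out in the concluding remarks as the other known short proof. The paper instead works with the Lambert series $\Theta(t)=\sum_{n\ge1}d(n)\ue^{-nt}$: it regularizes $\Theta$, applies the Dixon--Ferrar Poisson summation formula (Theorem~\ref{123}) to deduce Wigert's identity (Theorem~\ref{121a}) and the exponential-integral expansion (Theorem~\ref{125b}), and then integrates these against $f(t)$ using the Laplace-transform identity $\int_0^\infty\Theta(t)f(t)\,\ud t=\sum_n d(n)\hata{f}(n)$. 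The paper's route needs only elementary uniform-convergence estimates on a compact $t$-interval and no growth bounds for $\zeta$ in the critical strip; the price is that the resulting theorem applies to entire functions $\hata{f}$, which can never have compact support.

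The concrete gap in your sketch is the interchange of $\sum_n d(n)n^{s-1}$ with the integral on the shifted line. For $f\in C_0^2(\rz)$ whose support reaches $k=0$ (the relevant case, since $f(0)$ enters the formula), $\tilde f(s)$ has a simple pole at $s=0$ that must be removed before two integrations by parts give $\Orr(|t|^{-2})$; together with $|\chi(\sigma+\ui t)|^2\asymp|t|^{1-2\sigma}$ this leaves an integrand of size $|t|^{-1+2\epsilon}$ on $\Re s=-\epsilon$, so Fubini fails outright. The density argument you outline as a fallback is the right repair, but it requires you to prove directly that the right-hand side of \eref{13} --- in particular the infinite Bessel-kernel sum over $n$ --- defines a continuous functional on $C_0^2$ in the appropriate norm; this absolute-convergence estimate, via two integrations by parts and the large-argument asymptotics of $K_0$, $Y_0$, is exactly the technical core of Hejhal's proof and is not ``trivially licit'' even on $C_0^\infty$. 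Until that estimate is supplied, your argument is a plausible and historically accurate outline of the Ferrar/Hejhal strategy rather than a complete proof; it is, in any case, not the argument this paper gives.
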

Recently, the summation formula \eref{z2} found an interesting application in physics since it was shown in \cite{SteinerEndres:2008,Egger:2011} that it plays the role of an exact trace formula for a certain infinite quantum graph $\mathfrak{G}$. The infinite quantum graph $\mathfrak{G}$ introduced in \cite{SteinerEndres:2008} and investigated in detail in \cite{Egger:2011}  describes the quantum dynamics of a single spinless particle with mass $m=\frac{1}{2}$ ($\hbar=1$) moving on an infinite chain of consecutive edges $\{e_n\}_{n=1}^{\infty}$ with corresponding lengths $l_n=\frac{\pi}{n}$, $n\in\nz$, where two adjacent edges $e_l$ and $e_{l+1}$ are linked by a vertex $v_l$, $l\in\nz$. The associated Hilbert space is equivalent with the set of square integrable functions $\Lz^2[0,\infty)$ on the semi axis $\rz_{\geq0}$. The Schrödinger operator $\left(-\Delta,\Dz(\Delta)\right)$ consists of (minus) the Laplace operator $-\Delta$ acting on each edge separately. The assigned domain of definition $\Dz(\Delta)$  is a ``Sobolev-like'' function space (see \cite{SteinerEndres:2008,Egger:2011}) and is characterized by Dirichlet boundary conditions at the vertices. Therefore, the corresponding classical system is integrable  and the lengths of the corresponding classical primitive periodic orbits are given by $\lfp=\frac{2\pi}{n}$, $n\in\nz$. The spectrum of the infinite quantum graph $\mathfrak{G}$ is purely discrete, and the $n^{th}$ energy level of $\mathfrak{G}$ is given by $E_n=n^2$, $n\in\nz$, with the corresponding multiplicity precisely equal to the divisor function $d(n)$. Thus, the corresponding wavenumber counting function 
\be
\label{9y}
\hspace{-0.5cm} N(k):=\#\left\{0<k_n\leq k; \ k_n:=\sqrt{E_n}, \ E_n \ \mbox{eigenvalue of} \ \left(-\Delta,\Dz(\Delta)\right)\right\} 
\ee
coincides with $D(k)$ in \eref{9aa} and possesses the non-standard Weyl asymptotics $N(k)\sim k\ln k$, $k\rightarrow\infty$. Therefore, it coincides after a rescaling of the edge lengths by a factor $\frac{1}{2\pi}$ with the leading asymptotic term of the counting function for the non-trivial zeros of the Riemann zeta function $\zeta(s)$. This is interesting in the context of the search of the highly desired hypothetical Hilbert-Polya operator (see e.g. \cite{Berry:1999,Bogomolny:2007,Steiner:2009,Endres:2010}), i.e. a self adjoint operator whose eigenvalues or wavenumbers coincide with the non-trivial Riemann zeros and whose existence would prove the Riemann hypothesis.
 
One can show \cite{Egger:2011} that the Dirichlet quantum graph $\mathfrak{G}$ arises as a limit operator in the strong resolvent sense in the limit $\kappa$ goes to infinity (\cite[p.\,455]{Kato:1980} and \cite{Egger:2011}) of the Schrödinger operator 
\be
\label{z1}
\hspace{-1.5cm}H=-\Delta+\kappa\sum\limits_{n=1}^{\infty}\delta\left(x-x_n\right),\quad\kappa\geq0,\quad x\geq0,\quad x_n:=\pi\sum\limits_{m=1}^{n}\frac{1}{m},\quad n\in\nz, 
\ee
acting in $\Lz^2[0,\infty)$. The Hamiltonian \eref{z1} describes a single spinless particle with mass $m=\frac{1}{2}$ ($\hbar=1$) on the semi axis $\rz_{\geq0}$ subjected to $\delta$-potentials with strength $\kappa\geq0$ located at the points $x_n$, $n\in\nz$. Similarly, as for the Dirichlet quantum graph $\mathfrak{G}$ one can show that the spectrum of the Schrödinger operator \eref{z1} is purely discrete for $\kappa>0$ \cite{Egger:2011}, which is due to the monotonic decrease of the edge lengths $l_n$, $n\in\nz$, to zero.

%
%
%
%

Given a function $f(k)$ satisfying the conditions of theorem \ref{13a}, let us consider the operator function $f\left(\sqrt{-\Delta}\right)$ (defined by the functional calculus), where $(-\Delta,\Dz(\Delta))$ denotes the negative Laplacian acting on our infinite graph $\mathfrak{G}$. We then obtain the trace-identity
\be
\label{z109}
\tr f\left(\sqrt{-\Delta}\right)=\sum\limits_{n=1}^{\infty}d(n)f\left(k_n\right),
\ee
which clearly shows that the Vorono\"i summation formula \eref{z2} can be interpreted as a trace formula for the quantum graph $\mathfrak{G}$. Indeed, the trace \eref{z109}, i.e. the l.h.s. of \eref{z2} is nothing else than a sum over the quantal spectrum of $\mathfrak{G}$ by identifying $k_n=n$ as the wavenumbers and $d(n)$ as the spectral multiplicities. Furthermore, the first two terms on the r.h.s of \eref{z2} play the role of a ``Weyl term'' which is connected to the two leading asymptotic terms of $D(k)$ in \eref{9aa} respectively of the spectral counting function $N(k)$ in \eref{9y} (see \cite{SteinerEndres:2008,Egger:2011}). The last term on the r.h.s. of the Vorono\"i summation formula \eref{z2} presents the ``periodic orbit term'' of the trace formula and invokes the geometric properties in terms of the lengths $\lfp:=\frac{2\pi}{n}$, $n\in\nz$, of the classical primitive periodic orbits ($\lfp$ is the length of the $n^{th}$ classical primitive periodic orbit $p$) of the corresponding classical dynamics of the infinite graph $\mathfrak{G}$.

The typical splitting of the classical part of a trace formula for physical systems in a Weyl term and a periodic orbit term occurs also in the case of the Selberg trace formula \cite{Selberg:1956,Steiner:1987,SteinerAurich:1988}, the semiclassical Gutzwiller trace formula \cite{Gutzwiller:1971,Gutzwiller:1990,Berry:1999,Sieber:1990}, the trace formula for compact quantum graphs \cite{KottosSmilansky:1998,Kurasov:2005,BE:2008} and in the flat torus model \cite{Steiner:2009}.

The main very interesting difference to the usual physical trace formulae as for instance the Selberg trace formula valid for a free particle moving on a Riemannian manifold of constant negative curvature (genus $g\geq2$, [classical chaotic system]) \cite{Selberg:1956,SteinerAurich:1988,Steiner:2009}, the semiclassical Gutzwiller trace formula ($\hbar\rightarrow0$) for classical chaotic Hamiltonian systems \cite{Gutzwiller:1971,Gutzwiller:1990,Sieber:1990,Berry:1999}, the trace formula for the flat torus model \cite{Steiner:2009}, or the trace formula for compact quantum graphs \cite{KottosSmilansky:1998,Kurasov:2005,BE:2008} is that the lengths $\lfp=\frac{2\pi}{n}$ of the classical periodic orbits appear in the denominator of the argument in the periodic orbit term of \eref{z2}. One physical reason of this non-standard property is that for this special system the lengths of the classical primitive periodic orbits $\lfp=\frac{2\pi}{n}$, $n\in\nz$, possess zero as the only accumulation point and are bounded from above (by $2\pi$) in contrast to the generic case for classical chaotic systems where the lengths of the classical primitive periodic orbits tend to infinity and are bounded from below. 

In the semiclassical analysis of quantum mechanical quantities beyond the Weyl regime, in particular in the periodic-orbit theory for spectral statistics \cite{Berry:1986,Aurich:94,Berry:1999,Gnutzman:2006,Bogomolny:2007}, the relevant terms for this analysis are constituted by primitive periodic orbits with lengths in the neighbourhood of the accumulation point (infinitely many). This means for the Dirichlet quantum graph that the shortest periodic orbits (infinitely many) are relevant for a periodic orbit analysis of spectral quantities in contrast to the generic case where the longest periodic orbits are essential.


The trace formula \eref{z2} is the first example of an exact trace formula for an infinite quantum graph. We have shown in \cite{SteinerEndres:2008,Egger:2011} that the infinite Dirichlet quantum graph can be obtained by truncating the graph $\mathfrak{G}$ at an arbitrary vertex $N$ and then consider the limit $N$ to infinity. Therefore, the corresponding trace formulae for the finite quantum graphs must converge to the summation formula \eref{z2}. It is easy to perform this limit on the l.h.s. of the trace formulae, i.e. on the quantum mechanical part. However, to perform the limit on the periodic orbit parts of the trace formulae for the truncated graphs
is a highly non-trivial problem since one has to cope with divergent terms, which explains why the derivation of the formula \eref{z2} along these lines is still an unsolved problem. It is therefore very interesting to find a proof of the Vorono\"i summation formula \eref{z2} interpreted as the trace formula for the Dirichlet quantum graph $\mathfrak{G}$ based on physical quantities only.

It is the purpose of this note to prove the Vorono\"i summation formula (\ref{13}) for a function space different from that in theorem \ref{13a} and e.g. in \cite{Dixon:1931,Wilton:1932,Dixon:1937,Hejhal:1979}. Furthermore, we want to use in the proof only quantities which are directly related to the Dirichlet quantum graph and which are examined in \cite{SteinerEndres:2008,Egger:2011} such as the trace of the Euclidean wave group of the Dirichlet Laplacian that we denote herein by $\Theta(t)$. In our proof we apply the Poisson summation formula, but here in the more general form of Dixon and Ferrar \cite{Dixon:1937}. In the concluding remarks we discuss how our function space for the Vorono\"i summation formula is related to the known function spaces of \cite{Dixon:1937} and \cite{Hejhal:1979}. 
\section{An alternative proof of the Vorono\"i summation formula}
\label{109}

In order to present an alternative proof of the Vorono\"i summation formula, we introduce the function (a particular Lambert series) 
\be
\label{110a}
\The(z):=\sum\limits_{n=1}^{\infty}d(n)\ue^{-nz}=\sum\limits_{n=1}^{\infty}\frac{1}{\ue^{nz}-1},\quad \re z>0,
\ee
which coincides for $z\equiv t\in\rz_{>0}$ with the physical quantity $\Thd(t)$ considered in \cite{SteinerEndres:2008,Egger:2011} and is the trace of the Euclidean wave group of the Dirichlet Laplacian on the infinite graph $\mathfrak{G}$ introduced in \cite{SteinerEndres:2008,Egger:2011}. In \cite{SteinerEndres:2008,Egger:2011} we derived the following asymptotic relations, $z\equiv t\in\rz_{>0}$,
\begin{eqnarray}
\label{24}
%
%
\The(t) &=&-\frac{\ln t}{t}+\frac{\gamma}{t}+\frac{1}{4}+\Or(t),\quad t\rightarrow 0^+,\\
\The(t) &=&\Or\left(\ue^{-t}\right), \quad  t\rightarrow\infty,\label{24a}
%
%
\end{eqnarray}
where the last relation  is trivial. Wigert \cite[p,\,203]{Wigert:1916} derived for $\Theta(z)$, $\re z>0$, a non-trivial identity, however, without introducing the digamma function $\psi(z):=\frac{\Gamma'(z)}{\Gamma(z)}$, by means of the Euler-MacLaurin summation formula, cited here for the special case $z\equiv t\in\rz_{>0}$ (A new proof of theorem \ref{121a} involving a Poisson summation formula will be presented below):
\begin{theorem}[Wigert:\,1916, \cite{Wigert:1916}]
\label{121a}
For $\Th(t)$, defined in $(\ref{110a})$, it holds the identity 
\be
\label{121}
\hspace{-2cm}\Th(t)=-\frac{\ln t}{t}+\frac{\gamma}{t}+\frac{1}{4}-\frac{2}{t}\sum\limits_{n=1}^{\infty}\left[\re\psi\left(1+\ui\frac{2\pi n}{t}\right)-\ln\left(\frac{2\pi n}{t}\right)\right],\quad  t>0.
\ee
\end{theorem}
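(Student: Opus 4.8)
The plan is to start from the Lambert series $\The(t)=\sum_{n=1}^\infty (e^{nt}-1)^{-1}$ and expand each summand as a geometric series in $e^{-nkt}$, giving $\The(t)=\sum_{k=1}^\infty\sum_{n=1}^\infty e^{-nkt}$; summing over $n$ first yields $\The(t)=\sum_{k=1}^\infty (e^{kt}-1)^{-1}$ again, so instead I would sum over $k$ first and recognize $\sum_{n=1}^\infty (e^{nt}-1)^{-1}=\sum_{n=1}^\infty\frac{e^{-nt}}{1-e^{-nt}}$. The cleaner route is to write, for fixed $n$,
\begin{equation}
\frac{1}{e^{nt}-1}=\frac{1}{nt}-\frac12+\sum_{j=1}^\infty\frac{2nt}{(nt)^2+4\pi^2 j^2},
\end{equation}
which is the classical Mittag-Leffler / partial-fraction expansion of $\frac{1}{e^x-1}$ (equivalently the formula for $\coth$). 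This is the key analytic input and it converges absolutely for $t>0$.

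Next I would sum this identity over $n\ge 1$. The term $\frac{1}{nt}$ produces a divergent harmonic piece, and the $-\frac12$ produces a divergent constant; these must be regularized together with the double sum $\sum_{n}\sum_j \frac{2nt}{(nt)^2+4\pi^2j^2}$, whose inner structure is exactly a digamma function: using $\re\psi(1+\ui y)=-\gamma+\sum_{m=1}^\infty\left(\frac1m-\frac{m}{m^2+y^2}\right)$ one identifies, with $y=\frac{2\pi n}{t}$,
\begin{equation}
\sum_{j=1}^\infty\frac{2nt}{(nt)^2+4\pi^2 j^2}=\sum_{j=1}^\infty\frac{1}{2\pi j}\cdot\frac{(2\pi n/t)^{-1}\cdot(2\pi/t)\, j^{-1}}{\cdots},
\end{equation}
so that after the algebra the $n$th block equals $-\frac{1}{2}+\frac{1}{nt}+\frac{1}{2\pi n}\bigl(\ln(2\pi n/t)-\re\psi(1+\ui\,2\pi n/t)+\gamma\bigr)$ up to rearrangement; the point is that the logarithmically divergent tails of the harmonic series and of $\re\psi-\ln$ cancel, leaving the convergent sum $\sum_n\bigl[\re\psi(1+\ui\,2\pi n/t)-\ln(2\pi n/t)\bigr]$ that appears in \eref{121}.

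The remaining task is to pin down the non-series terms $-\frac{\ln t}{t}+\frac{\gamma}{t}+\frac14$. For this I would invoke the known asymptotics \eref{24}: since the series $\frac2t\sum_n[\re\psi(1+\ui\,2\pi n/t)-\ln(2\pi n/t)]$ is $\Or(t)$ as $t\to 0^+$ (each term is $\Or((n/t)^{-2})$ by the asymptotic $\re\psi(1+\ui y)=\ln y+\Or(y^{-2})$, and the sum telescopes against an integral comparison), matching the $t\to 0^+$ expansion of both sides forces precisely the coefficients $-\frac1t,\ \gamma/t,\ \frac14$ in front of $\ln t$, $1$, and the constant. Alternatively one gets these constants directly by careful bookkeeping of the regularization above, writing $\sum_{n\le N}\frac1{nt}=\frac1t(\ln N+\gamma)+o(1)$ and tracking the $\ln N$ against the tail of the digamma sum.

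\emph{Main obstacle.} The delicate point is the interchange of the two infinite summations (over $n$ and over the partial-fraction index $j$) and the simultaneous regularization of the three divergent pieces; one must group terms so that no conditionally convergent rearrangement is performed. I expect to handle this by summing $n$ up to a cutoff $N$, $j$ up to a cutoff $J$, taking $J\to\infty$ first (legitimate by absolute convergence in $j$ for each fixed $n$), and only then letting $N\to\infty$ after the $\ln N$ terms have been shown to cancel. Everything else is the standard digamma identity plus the asymptotic input \eref{24} already established in \cite{SteinerEndres:2008,Egger:2011}.
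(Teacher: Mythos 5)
Your route is genuinely different from the paper's. The paper regularizes $\Theta(t)$ by subtracting the term $\ue^{-nt}/(nt)$ from each summand, applies the Dixon--Ferrar Poisson summation formula \eref{114} (with $b=0$) to $h(nt)$, and extracts the digamma from the cosine transform of $h$ via the integral representation \eref{115}. You instead propose to start from the Mittag-Leffler partial-fraction expansion of $1/(e^{x}-1)$ and the \emph{series} representation $\re\psi(1+\ui y)=-\gamma+\sum_{m\ge1}\bigl(\frac1m-\frac{m}{m^2+y^2}\bigr)$. These are the right structural ingredients for an alternative proof, and the tools (partial fractions of $\coth$, the $\psi$-series) are more elementary than Poisson summation.

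Nevertheless the sketch has concrete gaps. First, the cutoff scheme you propose is circular: fixing $N$ and letting $J\to\infty$ first just reassembles $\sum_{n\le N}(\ue^{nt}-1)^{-1}$ term by term (since the $j$-sum reproduces $\tfrac12\coth(nt/2)-\tfrac1{nt}$), so no $\ln N$ terms appear and nothing is gained. The digamma $\re\psi\bigl(1+\ui\,2\pi j/t\bigr)$ can only come from the $n$-sum $\sum_{n\le N}n/(n^2+(2\pi j/t)^2)=H_N-\gamma-\re\psi(1+\ui\,2\pi j/t)+o(1)$, so you must let $N\to\infty$ in the inner sum; but then the $j$-sum of the residual $\ln N$-contributions diverges, and one has to couple the cutoffs (roughly $J\sim Nt/2\pi$) and verify that the three divergent pieces --- $\tfrac1t H_N$, $-N/2$, and the double sum --- cancel against one another to leave exactly the stated $-\tfrac{\ln t}{t}+\tfrac{\gamma}{t}+\tfrac14$. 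That bookkeeping is the real content of the theorem and is not addressed. Second, the claimed ``$n$th block'' identity $\tfrac{1}{\ue^{nt}-1}=-\tfrac12+\tfrac1{nt}+\tfrac{1}{2\pi n}\bigl(\ln(2\pi n/t)-\re\psi(1+\ui\,2\pi n/t)+\gamma\bigr)$ is false: as $t\to0^+$ the left side is $\tfrac1{nt}-\tfrac12+\Orr(nt)$, the right side is $\tfrac1{nt}-\tfrac12+\tfrac{\gamma}{2\pi n}+\orr(1)$, so the $\gamma/(2\pi n)$ term is spurious; moreover the digamma argument in the theorem is $2\pi n/t$ where $n$ plays the role of your partial-fraction index $j$, not the original Lambert-series index, and your identification ``$y=2\pi n/t$'' conflates the two. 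The asymptotic matching against \eref{24} can at best fix constants once the convergent series is isolated; it cannot replace the missing double-limit analysis.
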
                                               
Our proof of the Vorono\"i summation formula will be based on a different, but equivalent representation of $\Theta(t)$ which follows from the following theorem:
\begin{theorem}
\label{125b}
It holds the identity , $t>0$,
\be
\label{125c}
\hspace{-2cm}\eq{
&\sum\limits_{n=1}^{\infty}\left[\re\psi\left(1+\ui\frac{2\pi n}{t}\right)-\ln\left(\frac{2\pi n}{t}\right)\right]\\
&\hspace{1.5cm}=\sum\limits_{n=1}^{\infty}\dz(n)\left[\exp\left(\frac{4\pi^2n}{t}\right)\Ei\left(-\frac{4\pi^2n}{t}\right)+\exp\left(-\frac{4\pi^2n}{t}\right)\Ei\left(\frac{4\pi^2n}{t}\right)\right]. 
}
\ee
\end{theorem}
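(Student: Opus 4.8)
The strategy is to start from the two expressions we already have for $\The(t)$, namely its defining Lambert series \eref{110a} together with Wigert's identity \eref{121} (Theorem \ref{121a}), and to produce a second, independent evaluation of $\The(t)$ in which the divisor sum appears with the exponential–integral kernel on the right-hand side of \eref{125c}. Comparing the two representations and cancelling the common ``Weyl'' piece $-\frac{\ln t}{t}+\frac{\gamma}{t}+\frac14$ will then leave exactly the asserted identity. Concretely, I would write, for $\re z>0$,
\[
\The(z)=\sum_{n=1}^{\infty}\frac{1}{\ue^{nz}-1},
\]
and apply the Mittag--Leffler / partial-fraction expansion of the hyperbolic cotangent (equivalently of $1/(\ue^{w}-1)$) in the form
\[
\frac{1}{\ue^{w}-1}=\frac1w-\frac12+2w\sum_{k=1}^{\infty}\frac{1}{w^2+4\pi^2k^2}.
\]
Inserting $w=nz$ and summing over $n$ produces the divergent-looking pieces $\sum_n 1/(nz)$ and $-\tfrac12\sum_n 1$, which are precisely what the Wigert term $-2t^{-1}\sum_n[\re\psi(1+2\pi\ui n/t)-\ln(2\pi n/t)]$ is designed to regularise; so rather than manipulate those directly I would instead work from \eref{121} and transform only the $\psi$-term.

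The heart of the argument is therefore the transformation of $S(t):=\sum_{n=1}^{\infty}\bigl[\re\psi\bigl(1+\ui\tfrac{2\pi n}{t}\bigr)-\ln\bigl(\tfrac{2\pi n}{t}\bigr)\bigr]$. I would use the integral representation (Binet-type)
\[
\re\psi(1+\ui y)-\ln y=\int_0^{\infty}\!\Bigl(\frac{1}{\ue^{u}-1}-\frac1u+\frac12\Bigr)\cos(yu)\,\frac{\ud u}{1}\ \text{(appropriately normalised)},
\]
or more directly the formula $\psi(1+x)=-\gamma+\int_0^1\frac{1-s^{x}}{1-s}\,\ud s$ continued to $x=\ui y$; the cleanest route is
\[
\re\psi(1+\ui y)-\ln y=-\int_0^{\infty}\Bigl(\frac{1}{u}-\frac{1}{\ue^{u}-1}\Bigr)\ue^{-u}\,\text{(...)}\,,
\]
so that $S(t)$ becomes $\sum_n \int_0^\infty (\cdots)\cos(2\pi n u/t)\,\ud u$. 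Interchanging sum and integral and summing the cosine series via the (distributional) Poisson kernel $\sum_{n\ge1}\cos(2\pi n u/t)$ collapses the $u$-integral onto the lattice points $u\in t\mathbb{Z}_{>0}$; what survives is a sum over $m\in\nz$ of the Binet integrand evaluated near $u=mt$, and after re-expanding $1/(\ue^{u}-1)=\sum_{j\ge1}\ue^{-ju}$ one gets a double sum over $j,m$ whose inner structure is $\sum_{jm=n} 1=d(n)$. The remaining one-dimensional integral over the local variable is, up to the scaling $4\pi^2 n/t$, exactly
\[
\int_0^{\infty}\ue^{-a v}\Bigl(\frac{1}{v+1}+\cdots\Bigr)\ud v
=\ue^{a}\Ei(-a)\quad\text{and its companion}\quad \ue^{-a}\Ei(a),
\]
with $a=4\pi^2 n/t$, which reproduces the bracket on the right of \eref{125c}.

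The main obstacle, and the step I would spend the most care on, is the interchange of the infinite sum over $n$ with the $u$-integration and the subsequent use of the Poisson/Dirichlet-kernel collapse: the individual terms $\re\psi(1+\ui\tfrac{2\pi n}{t})-\ln(\tfrac{2\pi n}{t})$ decay only like $n^{-2}t^{2}$ (from the asymptotics $\psi(1+\ui y)=\ln y+\tfrac{1}{2\ui y}+O(y^{-2})$, with the $O(y^{-1})$ real part vanishing), which is just enough for absolute convergence of $S(t)$ but delicate when one exchanges it with a conditionally convergent cosine series. I would handle this by inserting a convergence factor $\ue^{-\epsilon u}$ (equivalently shifting $t\mapsto t$, $z\mapsto z$ with $\re z>0$ and only at the end letting $\epsilon\to0^{+}$), carrying out all interchanges for the regularised kernel where Fubini applies cleanly, and then justifying the limit termwise using the exponential decay \eref{24a}-type bound and the known small-argument behaviour $\ue^{a}\Ei(-a)=\tfrac1a-\tfrac1{a^2}+O(a^{-3})$ as $a\to\infty$ to see that the right-hand side of \eref{125c} converges absolutely for every $t>0$. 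Once both sides are shown to be analytic in $z$ on $\re z>0$ and to agree for $z=t>0$, the identity is established; I expect everything except this regularisation bookkeeping to be routine residue/Laplace-transform computation.
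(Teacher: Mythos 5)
Your plan does not, as sketched, reach the right-hand side of \eref{125c}; it stalls at Wigert's identity. Here is the problem. After inserting the Binet-type representation \eref{115} into the left-hand sum and (formally) interchanging $\sum_n$ with $\int_0^\infty\!du$, the object $\sum_{n\ge1}\cos(2\pi nu/t)$ collapses, in the distributional sense, to $\tfrac{t}{2}\sum_{m\ge0}\delta(u-mt)-\tfrac12$. That collapse produces \emph{point evaluations} of the Binet integrand $h(x)=\frac{1}{\ue^x-1}-\frac{\ue^{-x}}{x}$ at $u=mt$, i.e.\ the quantity $\sum_{m\ge1}h(mt)=\Theta(t)+t^{-1}\ln(1-\ue^{-t})$ (see \eref{110}), together with the boundary piece $\tfrac14 h(0^+)$ and $-\tfrac12\int_0^\infty h$. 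Expanding $1/(\ue^{mt}-1)=\sum_{j\ge1}\ue^{-jmt}$ then gives $\sum_n d(n)\ue^{-nt}$: you recover the Lambert series and hence re-derive Theorem \ref{121a}, not Theorem \ref{125b}. There is no ``remaining one-dimensional integral over a local variable'' once the deltas have acted, so the mechanism by which the combination $\ue^a\Ei(-a)+\ue^{-a}\Ei(a)$ is supposed to appear is missing. Regularising by $\ue^{-\epsilon u}$ does not rescue this: in the limit $\epsilon\to0^+$ you again get point evaluations, hence the Lambert series, not exponential-integral integrals.

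The paper's proof of Theorem \ref{125b} goes in the opposite direction and uses a different key computation. It \emph{starts} from the right-hand side of \eref{125c}, sets $h(x):=\ue^x\Ei(-x)+\ue^{-x}\Ei(x)$, and uses the absolute-convergence estimate \eref{28} to rewrite the divisor sum as the double sum $\sum_{m,n\ge1}h\bigl(\tfrac{4\pi^2m}{t}n\bigr)$. For each fixed $m$ it then applies the Dixon--Ferrar Poisson formula (Theorem \ref{123}) to the inner sum over $n$; the decisive ingredient is that the Fourier cosine transform of this particular $h$ is a simple rational function, \eref{128}, so the dual sum is a convergent series of rational terms \eref{129}, which is evaluated in closed form by Hansen's identity \eref{130} to give $\re\psi\bigl(1+\ui\tfrac{2\pi m}{t}\bigr)+\gamma$. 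This is precisely the step that produces the digamma function on the left of \eref{125c} and that your sketch has no analogue of. (Your worries about convergence and interchanging limits are reasonable, and the paper does address them via \eref{28} and the explicit $b=2$ determination in \eref{127}, but they are not where the plan breaks down.) If you want to retain your ``left-to-right'' orientation, you would still need an independent argument equating the Lambert series $\sum d(n)\ue^{-nt}$ term-by-term-in-$t$ with the $\Ei$-sum, which is essentially a restatement of the theorem you are trying to prove.
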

In \eref{125c} $\Ei(x)$ is the exponential integral function defined for $x>0$ by the Cauchy principal value (see \cite[p.\,342]{Magnus:1966}), where $\Ei(x)\equiv\Es(x)$ and $\Ei(-x)\equiv-\Ee(x)$ for $x>0$. 

In order to prove theorems \ref{121a} and \ref{125b}, we require the following theorem:
%
\begin{theorem}[Dixon/Ferrar:\,1937, \cite{Dixon:1937}]  
\label{123}                                              
If
\begin{equation}
\label{124}
H(x):=h(x)-b\ln x,\quad x\in(0,\infty),
\end{equation}
is of bounded variation in the neighbourhood of $x=0$, then it holds, $\alpha\in(0,\infty)$,  
%
%
%
\be
\label{114}
\eq{
\sum\limits_{n=1}^{\infty}h(\al n) &= \frac{1}{2}b\ln(2\pi)-\frac{b}{2}\ln\alpha-\frac{1}{2}H\left(0^+\right)+\frac{1}{\al}\int\limits_0^{\infty}h(x)\ud x\\
& \hspace{2.5cm}+2\sum\limits_{n=1}^{\infty}\left[\frac{1}{\al}\int\limits_0^{\infty}h(x)\cos\left(\frac{2\pi n}{\al}x\right)\ud x+\frac{b}{4n}\right]
}
\ee
provided that
\begin{itemize}
\item the sum on the l.h.s. and the first integral on the r.h.s. in $\eref{114}$ exist,
\item $h(x)\rightarrow 0$ as $x\rightarrow \infty$,
\item $h(x)$ is the integral of $h'(x)$ in $x\geq x_0$,
\item $|h'(x)|$ is integrable over $(x_0,\infty)$.
\end{itemize}
\end{theorem}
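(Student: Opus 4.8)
The plan is to deduce the identity \eref{125c} from the Dixon--Ferrar summation formula (Theorem~\ref{123}), applied not to the left-hand side of \eref{125c} but, after unfolding the divisor function, to its right-hand side. Set $\beta:=4\pi^{2}/t$ and
\[
\phi(y):=\ue^{y}\Ei(-y)+\ue^{-y}\Ei(y),\qquad y>0 .
\]
Since $\dz(n)=\#\{(j,k)\in\nz^{2}:jk=n\}$, the right-hand side of \eref{125c} is $\sum_{j=1}^{\infty}\sum_{k=1}^{\infty}\phi(\beta jk)$. From the elementary expansions $\Ei(\pm y)=\gamma+\ln y+\sum_{m\geq1}(\pm y)^{m}/(m\cdot m!)$ one reads off $\phi(y)=\Orr(y^{-2})$ as $y\to\infty$ and $\phi(y)=2\ln y+2\gamma+\orr(1)$ as $y\to0^{+}$; combined with \eref{9a} this makes the double sum absolutely convergent, so the summation order is immaterial, and it also gives $\phi(y)-2\ln y=2\gamma\cosh y+2(\cosh y-1)\ln y+g(y)$ with $g$ entire and $g(0)=0$, so that $y\mapsto\phi(y)-2\ln y$ extends continuously to $y=0$ with value $2\gamma$.

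The core of the argument is to evaluate, for each fixed $j\in\nz$, the inner sum $\sum_{k=1}^{\infty}\phi(\beta j\cdot k)$ by invoking Theorem~\ref{123} with $h=\phi$, $\al=\beta j$ and $b=2$. The hypotheses hold: $\phi(y)\to0$ as $y\to\infty$ with $\phi'$ integrable near infinity, $\phi\in\Lz^{1}(0,\infty)$ by the asymptotics above, and $H(y)=\phi(y)-b\ln y$ is of bounded variation near $0$ with $H(0^{+})=2\gamma$. It then remains to compute the two integrals in \eref{114} for $h=\phi$. For that I would use the representations $-\ue^{a}\Ei(-a)=\int_{0}^{\infty}\ue^{-at}(1+t)^{-1}\,\ud t$ and $\ue^{-a}\Ei(a)=\Pp\!\int_{0}^{\infty}\ue^{-at}(1-t)^{-1}\,\ud t$ (each obtained by the substitution $u=a(1\pm t)$ in the integral definition of $\Ei$), which combine to
\[
\phi(a)=\Pp\!\int_{0}^{\infty}\ue^{-at}\,\frac{2t}{1-t^{2}}\,\ud t,\qquad a>0 .
\]
Interchanging the order of integration --- against $1$ and against $\cos(\lambda a)$ respectively --- and using $\int_{0}^{\infty}\ue^{-at}\cos(\lambda a)\,\ud a=t/(t^{2}+\lambda^{2})$, the elementary principal value $\Pp\!\int_{0}^{\infty}(1-t^{2})^{-1}\,\ud t=0$ and $\int_{0}^{\infty}(t^{2}+\lambda^{2})^{-1}\,\ud t=\pi/(2\lambda)$, a partial-fraction decomposition in $t^{2}$ should give
\[
\int_{0}^{\infty}\phi(x)\,\ud x=0,\qquad \int_{0}^{\infty}\phi(x)\cos(\lambda x)\,\ud x=-\frac{\pi\lambda}{1+\lambda^{2}}\quad(\lambda>0).
\]

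Substituting $b=2$, $H(0^{+})=2\gamma$, $\al=\beta j=4\pi^{2}j/t$ and $2\pi k/\al=kt/(2\pi j)$ into \eref{114} and collecting terms, I expect to arrive at
\[
\sum_{k=1}^{\infty}\phi\!\left(\frac{4\pi^{2}jk}{t}\right)=-\ln\!\left(\frac{2\pi j}{t}\right)-\gamma+\sum_{k=1}^{\infty}\left[\frac{1}{k}-\frac{k}{(2\pi j/t)^{2}+k^{2}}\right].
\]
The classical partial-fraction series $\psi(1+z)=-\gamma+\sum_{k\geq1}\big(k^{-1}-(k+z)^{-1}\big)$, taken at $z=\ui\,2\pi j/t$ and with real parts, identifies the last sum with $\re\psi(1+\ui\,2\pi j/t)+\gamma$; hence the inner sum equals $\re\psi(1+\ui\,2\pi j/t)-\ln(2\pi j/t)$, which is precisely the $j$-th term of the left-hand side of \eref{125c}. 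Summing over $j$ --- legitimate by the absolute convergence established above --- and refolding $\sum_{j}\sum_{k}\phi(\beta jk)$ back into $\sum_{n}\dz(n)\phi(\beta n)$ then yields \eref{125c}.

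The step I expect to be the main obstacle is the rigorous justification of the interchange of summation and integration in the computation of the two transforms: $\phi$ is represented only by a principal-value integral, the factor $2t/(1-t^{2})$ has a simple pole at $t=1$, and the oscillatory weight $\cos(\lambda a)$ destroys absolute convergence of the iterated integral. I would treat this by excising a symmetric interval $(1-\varepsilon,1+\varepsilon)$ around $t=1$, applying Fubini on the complement --- where, after one integration by parts in $a$, the integrand decays fast enough in $t$ --- and then letting $\varepsilon\to0^{+}$, checking that the contribution of the excised interval vanishes in the limit. Verifying the remaining Dixon--Ferrar hypotheses for $\phi$ is routine; in particular the decay of $\phi$ and $\phi'$ at infinity can alternatively be extracted from the differential relation $\phi''(a)-\phi(a)=-2a^{-2}$ satisfied by $\phi$.
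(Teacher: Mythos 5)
Your proposal does not actually prove the statement under review. Theorem~\ref{123} \emph{is} the Dixon--Ferrar form of the Poisson summation formula; your argument takes that formula as given (``invoking Theorem~\ref{123} with $h=\phi$, $\al=\beta j$ and $b=2$'') and uses it to derive the identity \eref{125c}, i.e.\ Theorem~\ref{125b}. As a proof of Theorem~\ref{123} it is therefore circular. Note that the paper itself supplies no proof of Theorem~\ref{123} either: it imports the result from Dixon and Ferrar (1937), the passage from $\alpha=1$ to general $\alpha>0$ being a harmless rescaling $x\mapsto\al x$ of $h$. If you were in fact asked for a proof of the Poisson-type formula \eref{114} itself, the missing content is the classical argument (Fourier expansion of $\sum_n h(x+n)$, or equivalently the treatment of the logarithmic singularity of $h$ at $0$ which produces the $\frac{b}{4n}$ counterterms and the $\frac{1}{2}b\ln(2\pi)$ constant), none of which appears in your text.

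Read instead as a blind proof of Theorem~\ref{125b}, your argument is correct and follows essentially the same route as the paper's own proof: unfold $d(n)$ into a double sum over $(j,k)$ (justified by the $\Orr(y^{-2})$ decay of $\phi$ at infinity), apply \eref{114} to the inner sum with $b=2$ and $H\left(0^+\right)=2\gamma$, and identify $\sum_{k}\bigl[k^{-1}-k/(k^2+(2\pi j/t)^2)\bigr]$ with $\re\psi\left(1+\ui\frac{2\pi j}{t}\right)+\gamma$. The only substantive differences are points where you supply computations that the paper delegates to tables: you derive $\int_0^{\infty}\phi(x)\cos(\lambda x)\,\ud x=-\pi\lambda/(1+\lambda^2)$ and $\int_0^{\infty}\phi(x)\,\ud x=0$ from the principal-value representation $\phi(a)=\Pp\int_0^{\infty}\ue^{-at}\,2t\,(1-t^2)^{-1}\ud x$ (the paper cites \cite[p.\,8]{Magnus:1954} for the former and uses the vanishing of $\int_0^\infty h$ tacitly in passing from \eref{114} to \eref{129}), and you obtain \eref{131} from the partial-fraction series of $\psi$ rather than from the identity \eref{130} of \cite{Hansen:1975}. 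Your closing caveat about justifying the interchange of integrals across the pole at $t=1$ is well placed, and the excision-plus-integration-by-parts argument you sketch is the standard and adequate fix; both of your computed transforms check out against the paper's \eref{128} and \eref{129}.
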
 
\hspace{-0.85cm}(Note that in \cite{Dixon:1937} the relation (\ref{114}) has been stated for the special value $\alpha\equiv1$).

We first present a new proof of theorem \ref{121a} which involves the Poisson summation formula (\ref{114}) rather than the Euler-MacLaurin summation formula.

\begin{proof}[New proof of theorem \ref{121a}]
We ``regularize'' $\The(t)$ in (\ref{110a}) at $t=0$:
\be
\The(t) &=& \sum\limits_{n=1}^{\infty}\frac{1}{\ue^{nt}-1}=\sum\limits_{n=1}^{\infty}\left[\frac{1}{\ue^{nt}-1}-\frac{\ue^{-nt}}{nt}\right]+\sum\limits_{n=1}^{\infty}\frac{\ue^{-nt}}{nt}\nonumber\\
       &=& \sum\limits_{n=1}^{\infty}h(nt)-\frac{\ln\left(1-\ue^{-t}\right)}{t},\quad t\in(0,\infty),\label{110}
\ee 
with
\be
\label{111}
h(x):=\frac{1}{\ue^x-1}-\frac{\ue^{-x}}{x}=\ue^{-x}\left[\frac{1}{1-\ue^{-x}}-\frac{1}{x}\right],\quad x\in (0,\infty).
\ee
We notice that
\be
\label{112}
h(x)=\frac{1}{2}+\Or(x),\quad x\rightarrow0^+, \quad \mbox{and}\quad h(x)=\Or\left(\ue^{-x}\right), \quad x\rightarrow\infty.
\ee
Now, we use theorem \ref{123} identifying $b=0$ for this case. With the identity \cite[p.\,16]{Magnus:1966} for the digamma function 
\be
\label{115}
\psi(z)=\ln z+\int\limits_{0}^{\infty}\left[\frac{1}{x}-\frac{1}{1-\ue^{-x}}\right]\ue^{-zx}\ud x,\quad \re z>0,
\ee
we obtain the relation ($n\in\nz_0$, $t\in(0,\infty)$)
\be
\label{116}
\int\limits_{0}^{\infty}h(x)\cos\left(\frac{2\pi n}{t}x\right)\ud x =\re\left\{\ln\left(1+\ui\frac{2\pi n}{t}\right)-\psi\left(1+\ui\frac{2\pi n}{t}\right)\right\}.
\ee
Furthermore, with ($n\in\nz$, $t\in(0,\infty)$)
\be
\label{117}
\re\ln\left(1+\ui\frac{2\pi n}{t}\right)=\ln\left(\frac{2\pi n}{t}\right)+\frac{1}{2}\ln\left(1+\left(\frac{t}{2\pi n}\right)^2\right)
\ee
and the identity \cite[p.\,85]{Abramowitz:1992}
\be
\label{118}
\sinh z= z\prod\limits_{n=1}^{\infty}\left(1+\left(\frac{z}{n\pi}\right)^2\right),\quad z\in\kz,
\ee
we obtain ($t\in(0,\infty)$)
\be
\label{119}
\sum\limits_{n=1}^{\infty}\ln\left[1+\left(\frac{t}{2\pi n}\right)^2\right]=\ln\left[\frac{\sinh\left(\frac{t}{2}\right)}{\left(\frac{t}{2}\right)}\right]=\frac{t}{2}+\ln\left(1-\ue^{-t}\right)-\ln t.
\ee
Using (see (\ref{116}))
\be
\label{a100}
h\left(0^+\right)=\frac{1}{2},\quad \int\limits_{0}^{\infty}h(x)\ud x=-\psi(1)=\gamma
\ee
and combining the Poisson summation formula (\ref{114}) with the results of (\ref{110}), (\ref{116}), (\ref{117}), (\ref{119}) and (\ref{a100}), we altogether have proved theorem \ref{121a}.
\end{proof}
\begin{proof}[Proof of theorem \ref{125b}]
Due to the asymptotics \cite[pp.\,346, 347]{Magnus:1966} (after correcting a typographic mistake by replacing $\ue^{-x}$ by $\ue^x$ on p. 347 l.c.)
\begin{equation}
\label{28}
e^x\Ei(-x)+e^{-x}\Ei(x)=\sum\limits_{m=0}^{M}\frac{2(2m+1)!}{x^{2m+2}}+\Or\left(\frac{1}{x^{2M+4}}\right),\quad x\rightarrow \infty,
\end{equation}
we can write the last sum in (\ref{125c}) as a double sum
\begin{equation}
\label{125a}
\sum\limits_{n=1}^{\infty}d(n)h\left(\frac{4\pi^2n}{t}\right)=\sum\limits_{m=1}^{\infty}\sum\limits_{n=1}^{\infty}h\left(\frac{4\pi^2m}{t}n\right),\quad t\in(0,\infty),
\end{equation}
where we have defined
\be
\label{126}
h(x):=e^x\Ei\left(-x\right)+e^{-x}\Ei\left(x\right),\quad x\in(0,\infty).
\ee
Notice that, due to \cite[p.\,343, 346, 347]{Magnus:1966} and (\ref{28}), the function $h(x)$ satisfies the required conditions of theorem \ref{123}. Furthermore, it holds \cite[p.\,343]{Magnus:1966}
\be
\label{127}
h(\al x)=2\gamma+2\ln\al+2\ln x +\Or(x^2\ln x), \quad x\rightarrow0^+,\quad \al\in(0,\infty),
\ee
and thus we identify in this case $b=2$. By Fourier's inversion formula we obtain with \cite[p.\,8]{Magnus:1954}
\be
\label{128}
\int\limits_{0}^{\infty}h(x)\cos\left(\frac{2\pi n}{\al}x\right)\ud x=-\frac{\frac{2\pi^2n}{\al}}{\left(\frac{2\pi n}{\al}\right)^2+1},\quad \al\in(0,\infty),\quad n\in\nz_0.
\ee
Therefore, we obtain by (\ref{114}) and (\ref{128}) setting $\al:=\frac{4\pi^2m}{t}$ ($m\in\nz$, $t\in(0,\infty)$)
\be
\hspace{-2cm}\sum\limits_{n=1}^{\infty}h\left(\frac{4\pi^2m}{t}n\right) &=& \ln (2\pi)-\gamma-\ln\left(\frac{4\pi^2m}{t}\right)+\sum\limits_{n=1}^{\infty}\left[\frac{-n}{n^2+\left(\frac{2\pi m}{t}\right)^2}+\frac{1}{n}\right]\nonumber\\
 &=& -\gamma-\ln\left(\frac{2\pi m}{t}\right)+\sum\limits_{n=1}^{\infty}\frac{\left(\frac{2\pi m}{t}\right)^2}{\left(n^2+\left(\frac{2\pi m}{t}\right)^2\right)n}.\label{129}
\ee
With the identity \cite[p.\,106]{Hansen:1975} ($x,y\in(0,\infty)$)
\be
\label{130}
\sum\limits_{k=1}^{\infty}\frac{1}{\left((kx)^2+y^2\right)k}=\frac{1}{2y^2}\left[\psi\left(1+\ui\frac{y}{x}\right)+\psi\left(1-\ui\frac{y}{x}\right)+2\gamma\right],
\ee
we get for the last sum on the r.h.s. in (\ref{129}) ($m\in\nz$, $t\in(0,\infty)$)
\be
\label{131}
\sum\limits_{n=1}^{\infty}\frac{\left(\frac{2\pi m}{t}\right)^2}{\left(n^2+\left(\frac{2\pi m}{t}\right)^2\right)n}=\re\psi\left(1+\ui\frac{2\pi m}{t}\right)+\gamma.
\ee
Putting (\ref{114}), (\ref{126}), (\ref{129}) and (\ref{131}) together, we gain (\ref{125c}). 
\end{proof}
Let us remark that \eref{125c} has also been derived in \cite{Egger:2011}, where we have given the following decomposition (however, by using the Vorono\"i summation formula \eref{z2}) 
\be
\label{131a} 
\Theta(t)=\Th^W(t)+\Tho(t), \quad t>0,
\ee
in a ``Weyl term'' and an ''oscillatory term'' defined as, $t>0$,
\be
\label{27}
\fl\eq{
\Th^W(t) &:=-\frac{\ln t}{t}+\frac{\gamma}{t}+ \frac{1}{4},\\
\Tho(t) &:= -\frac{2}{t}\sum\limits_{n=1}^{\infty}\dz(n)\left[\exp\left(\frac{4\pi^2n}{t}\right)\Ei\left(-\frac{4\pi^2n}{t}\right)+\exp\left(-\frac{4\pi^2n}{t}\right)\Ei\left(\frac{4\pi^2n}{t}\right)\right].
}
\ee

For the next steps, we need the notion of the Laplace transform in the sense of \cite{Doetsch:1937}.
Therefore, we define the  \textsl{L}-function space \cite[p.\,13]{Doetsch:1937}.
\begin{defn}
\label{132}
A real or complex valued function $f$ is an $L$-function iff
\begin{itemize}
	\item $f$ is defined at least for $t>0$,
	\item in each finite interval $0<T_1\leq t\leq T_2$ the function $f$ is Riemann integrable,
	\item the improper Riemann integral
	\be
	\label{133}
  \lim_{\epsilon\rightarrow0}{\int\limits_{\epsilon}^{T}}|f(t)|\ud t,\quad T\in(0,\infty) 
	\ee
	     exists,
	\item there exists a real or complex $s_0$ such that for some fixed $T>0$ the following improper Riemann integral
	\be
	\label{133a}
	\lim_{\omega\rightarrow \infty}{\int\limits_{T}^{\omega}}|f(t)|\ue^{-s_0t}\ud t
	\ee
	     exists.
\end{itemize}
\end{defn}
Now, let $f$ be an \textsl{L}-function. We consider an expression of the form
\be
\label{134}
\int\limits_{0}^{\infty}\Th(t)f(t)\ud t,
\ee
where $\Theta(t)$ is defined in (\ref{110a}). Due to (\ref{24}) the improper integral (\ref{134}) exists if
\be
\label{135}
f(t)\sim Bt^{\beta}L(t), \quad t\rightarrow0, \quad \beta>0, \quad B\in\kz\quad \mbox{arbitrary},
\ee
where $L(t)$ is a slowly increasing function at $t=0$ (for instance the absolute value of the logarithmic function) which is continuous and positive  possessing the property \cite[p.\,201]{Doetsch:1937}
\be
\label{138}
\frac{L(ut)}{L(t)}\rightarrow1, \quad t\rightarrow 0, \quad u>0\quad \mbox{arbitrary}.
\ee
For such a function one can show \cite[p.\,202]{Doetsch:1937}, \cite{Karamata:1930} (replacing $t$ by $\frac{1}{t}$) that
\be
\label{139}
t^{\epsilon}L(t)\rightarrow0, \quad t\rightarrow 0, \quad \epsilon>0\quad \mbox{arbitrary}.
\ee
Furthermore, we assume for further convenience that the function $f(t)$ 
is a so-called
regulated function (i.e. for every $t$ in the domain of definition both the left and right limits $f(t-)$ and $f(t+)$ exist [but must not be equal])
and
possesses compact support. Since
\be
\label{136}
{\Th}_{N}(t):=\sum\limits_{n=1}^{N}d(n)\ue^{-nt}\leq\Th(t) \quad \mbox{for all} \quad N\in\nz,\quad t\in(0,\infty),
\ee
the improper Riemann integral (\ref{134}) converges uniformly and absolutely with respect to the summation in $\Th(t)$. Furthermore, ${\Th}_{N}(t)$ converges locally uniformly on $(0,\infty)$ to $\Th(t)$. Thus, integration and summation can be interchanged in (\ref{134}) and we obtain
\be
\label{137}
\int\limits_{0}^{\infty}\Th(t)f(t)\ud t=\sum\limits_{n=1}^{\infty}d(n)\hata{f}(n),
\ee
where $\hata{f}$ denotes the Laplace transform of $f$ defined as
\be
\label{122a}
\hata{f}(s):=\int\limits_0^{\infty}\ue^{-st}f(t)\ud t.
\ee
In order to check that the sum on the r.h.s. in (\ref{137}) converges, we can use a theorem of \cite[p.\,202]{Doetsch:1937}: 
\begin{theorem}[(Doetsch, \cite{Doetsch:1937})]
\label{140}
Let $f$ be an $L$-function possessing the asymptotics
\be
\label{141}
f(t)\sim Bt^{\beta}L(t),\quad t\rightarrow0, \quad \beta>-1, \quad B\in\kz \quad \mbox{arbitrary},
\ee
where $L(t)$ is a regulated increasing function at $t = 0$ (and fulfills therefore $(\ref{138})$ and $(\ref{139}))$. Then for the Laplace transform $\hata{f}$ of $f$ the following asymptotic formula holds:
\be
\label{142}
\hata{f}(s)\sim B\frac{\Gamma(\beta+1)}{s^{\beta+1}}L\left(\frac{1}{s}\right),\quad s\rightarrow\infty, \quad s\in\rz_{>0}.
\ee
\end{theorem}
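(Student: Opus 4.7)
The plan is an Abelian-type argument. After the substitution $t=u/s$ in (\ref{122a}),
$$\hata f(s)=\frac{1}{s}\int_0^\infty e^{-u}f(u/s)\,du,$$
the hypothesis (\ref{141}) and the slow-variation relation (\ref{138}) give, for each fixed $u>0$,
$$\frac{s^{\beta+1}}{L(1/s)}\,f(u/s)\ \longrightarrow\ B u^\beta\qquad (s\to\infty),$$
and formal integration against $e^{-u}$ yields $B\Gamma(\beta+1)$. The task is to justify the interchange of limit and integration.

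I would split the $t$-integral into the small-$t$ part $(0,T)$, on which (\ref{141}) is effective, and the tail $[T,\infty)$. The tail is controlled by the $L$-function property (\ref{133a}): there exists $s_0$ such that, for all $s>s_0$,
$$\int_T^\infty|f(t)|e^{-st}\,dt\ \le\ e^{-(s-s_0)T}\int_T^\infty|f(t)|e^{-s_0 t}\,dt\ =\ O\bigl(e^{-(s-s_0)T}\bigr),$$
which by (\ref{139}) decays faster than $s^{-(\beta+1)}L(1/s)$ and is therefore negligible. On $(0,T)$, write $f(t)=B t^\beta L(t)(1+\varepsilon(t))$ with $\varepsilon(t)\to 0$ as $t\to 0^+$; choosing $T$ so small that $|\varepsilon(t)|\le\eta$ on $(0,T)$ reduces the main estimate, via the further substitution $u=st$ in $\int_0^T e^{-st}t^\beta L(t)\,dt$, to proving
$$\frac{1}{L(1/s)}\int_0^{sT}e^{-u}u^\beta L(u/s)\,du\ \longrightarrow\ \Gamma(\beta+1)\qquad (s\to\infty).$$

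For this last limit I would split $(0,sT)$ into $(0,1/A)$, $[1/A,A]$, and $[A,sT]$ for a large auxiliary parameter $A$. On the middle range, the uniform-convergence theorem for slowly varying functions (a standard consequence of (\ref{138})) gives $L(u/s)/L(1/s)\to 1$ uniformly, and the integral converges to $\int_{1/A}^{A}u^\beta e^{-u}\,du$, which approximates $\Gamma(\beta+1)$ arbitrarily well as $A\to\infty$. On the outer ranges I would invoke a Potter-type bound—a standard quantitative refinement of (\ref{138}) made possible by (\ref{139})—of the form
$$\frac{L(u/s)}{L(1/s)}\ \le\ C_\epsilon\bigl(u^\epsilon+u^{-\epsilon}\bigr)\qquad \bigl(u\in(0,sT),\ s\ \text{large}\bigr),$$
for any prescribed $\epsilon>0$. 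Choosing $\epsilon<\beta+1$ (possible since $\beta>-1$) yields the integrable majorant $u^\beta e^{-u}(u^\epsilon+u^{-\epsilon})$ on $(0,\infty)$, so the contributions from $(0,1/A)$ and $[A,\infty)$ vanish as $A\to\infty$ by dominated convergence.

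The main obstacle is precisely this Potter-type control. The condition (\ref{138}) is only a pointwise limit, and a crude application of dominated convergence fails because $L$ may be unbounded as its argument tends to $0$. Promoting (\ref{138}) to a uniform statement on compact sets and to a quantitative power-type bound on the full range is what genuinely drives the proof; once both are available, the three-range splitting followed by the successive limits $s\to\infty$, then $A\to\infty$, then $\eta\to 0$ with $T$ fixed small, yields (\ref{142}).
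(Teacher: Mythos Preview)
The paper does not actually prove this theorem; it is quoted verbatim as a result of Doetsch \cite[p.\,202]{Doetsch:1937} and used as a black box to guarantee absolute convergence of $\sum d(n)\hata f(n)$. So there is no ``paper's own proof'' to compare against.

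Your outline is the classical Karamata--Doetsch Abelian argument and is essentially how the result is established in the reference: split off the tail $[T,\infty)$ using the $L$-function hypothesis, reduce the contribution of $(0,T)$ via the change of variable $u=st$ to a statement about $\int_0^{sT}e^{-u}u^\beta L(u/s)\,du\big/L(1/s)$, and control the latter by the uniform-convergence and Potter-type properties of slowly varying functions. The three-range decomposition in $u$ and the order of limits $s\to\infty$, $A\to\infty$, $\eta\to0$ are exactly the standard mechanism.

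One small slip: in your first display after the substitution you write
\[
\frac{s^{\beta+1}}{L(1/s)}\,f(u/s)\ \longrightarrow\ B u^\beta ,
\]
but since $\hata f(s)=s^{-1}\int_0^\infty e^{-u}f(u/s)\,du$, the pointwise limit you need (so that integration against $e^{-u}$ produces $B\,\Gamma(\beta+1)$ and matches (\ref{142})) is
\[
\frac{s^{\beta}}{L(1/s)}\,f(u/s)\ \longrightarrow\ B u^\beta .
\]
The exponent is off by one; the rest of the argument is unaffected.
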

Therefore, the sum on the r.h.s. of (\ref{137}) converges absolutely due to (\ref{9a}). Now, we use theorem \ref{121a} and theorem \ref{125b} respectively (\ref{27}) and attain with (\ref{137}) ($\beta>0$ see (\ref{135}))
\be
& &\sum\limits_{n=1}^{\infty}d(n)\hata{f}(n)=\int\limits_{0}^{\infty}\left[\Thw(t)+\Tho(t)\right]f(t)\ud t\nonumber\\
                                     & &\hspace{0.3cm}=-\int\limits_{0}^{\infty}\frac{\ln t}{t}f(t)\ud t+\gamma\int\limits_{0}^{\infty}\frac{f(t)}{t}\ud t+\frac{1}{4}\int\limits_{0}^{\infty}f(t)\ud t+\int\limits_{0}^{\infty}f(t)\Tho(t)\ud t.\label{143}
\ee
Note that by the assumptions on the function $f$ the improper Riemann integrals on the r.h.s of (\ref{143}) are identical with the corresponding Lebesgue integrals. Using the identities \cite[p.\,149]{Magnus:1954} ($t\in(0,\infty)$)
\be
\label{144}
-\int\limits_{0}^{\infty}\left(\gamma+\ln k\right)\ue^{-kt}\ud k=\frac{\ln t}{t}, \quad \int\limits_{0}^{\infty}\ue^{-kt}\ud k=\frac{1}{t},
\ee
we obtain, by Fubini's theorem (for Lebesgue integrals) and with our previous assumption on the function $f$, the identity
\be
\hspace{-1.5cm}-\int\limits_{0}^{\infty}\frac{\ln t}{t}f(t)\ud t+\gamma\int\limits_{0}^{\infty}\frac{f(t)}{t}\ud t+\frac{1}{4}\int\limits_0^{\infty}f(t)\ud t
=\int\limits_{0}^{\infty}(\ln k +2\gamma)\hata{f}(k)\ud k+\frac{\tilde{f}(0)}{4}.\label{145}
\ee
Again, the Lebesgue integrals on the r.h.s. in (\ref{145}) are identical with the corresponding improper Riemann integrals. Now, we use the required condition of the compact support of the function $f$. Thus, the last integral on the r.h.s. in (\ref{143}) is in fact an integral over a finite interval and on this interval, due to (\ref{28}), the sum ($t\in(0,\infty)$)
\be
\fl{\Tho}_{N}(t):=-\frac{2}{t}\sum\limits_{n=1}^{N}\dz(n)\left[\exp\left(\frac{4\pi^2n}{t}\right)\Ei\left(-\frac{4\pi^2n}{t}\right)+\exp\left(-\frac{4\pi^2n}{t}\right)\Ei\left(\frac{4\pi^2n}{t}\right)\right] \label{146}
\ee
converges uniformly to $\Tho(t)$ for $N\rightarrow\infty$. Therefore, we can interchange integration and summation in the last term on the r.h.s of (\ref{143}). Finally, we can again use Fubini's theorem following the above argumentation, and obtain with the relation (see e.g. \cite[p.\,352]{Prudnikov:1986} and \cite[p.\,266]{Prudnikov:1986})
\be
\label{147a}
\hspace{-1.5cm}\eq{
& \int\limits_{0}^{\infty}\left[\exp\left(\frac{4\pi^2n}{t}\right)\Ei\left(-\frac{4\pi^2n}{t}\right)+\exp\left(-\frac{4\pi^2n}{t}\right)\Ei\left(\frac{4\pi^2n}{t}\right)\right]\frac{f(t)}{t}\ud t\\
&\hspace{3.5cm}=\int\limits_{0}^{\infty}\left[\pi Y_0\left(4\pi\sqrt{nk}\right)-2K_0\left(4\pi\sqrt{nk}\right)\right]\hata{f}(k)\ud k
}
\ee
our final theorem:
\begin{theorem}
\label{147}
Let $f$ be a regulated $L$-function possessing compact support and the asymptotics $(\ref{135})$. Then the following formula holds:
\be
\sum\limits_{n=1}^{\infty}d(n)\hata{f}(n) &=& \int\limits_{0}^{\infty}(\ln k +2\gamma)\hata{f}(k)\ud k +\frac{\hata{f}(0)}{4}\nonumber\\
                                         &\phantom{=}&\hspace{-1.5cm} +2\pi\sum\limits_{n=1}^{\infty}d(n)\int\limits_0^{\infty}\left[\frac{2}{\pi}K_0\left(4\pi\sqrt{nk}\right)-Y_0\left(4\pi\sqrt{nk}\right)\right]\hata{f}(k)\ud k, \label{148}                                                
\ee
where $\hata{f}$ denotes the Laplace transform of $f$.
\end{theorem}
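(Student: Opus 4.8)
The plan is to obtain \eref{148} by assembling the three contributions already isolated above, so that essentially no fresh computation is required---only a careful accounting of limit interchanges.

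First I would note that, since $f$ is a regulated $L$-function with the asymptotics \eref{135} and compact support, the improper integral $\int_0^\infty\Theta(t)f(t)\,\ud t$ converges absolutely (using \eref{24} near $t=0$, where $\beta>0$ makes $f(t)/t$ integrable against the $\frac{\ln t}{t}$-singularity of $\Theta$, and \eref{24a} near infinity). Combining the monotonicity bound \eref{136} with the local uniform convergence of the partial sums $\Theta_N(t)$, summation and integration may be interchanged, which is exactly \eref{137}: $\int_0^\infty\Theta(t)f(t)\,\ud t=\sum_{n=1}^\infty d(n)\hata f(n)$, the right-hand side converging absolutely by theorem \ref{140} together with the divisor bound \eref{9a}.

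Next I would substitute the decomposition $\Theta(t)=\Thw(t)+\Tho(t)$ of theorems \ref{121a} and \ref{125b} (equivalently \eref{131a}, \eref{27}) into \eref{137}, landing at \eref{143}. For the Weyl part, the two elementary Laplace identities \eref{144}, i.e. $-\frac{\ln t}{t}=\int_0^\infty(\gamma+\ln k)\ue^{-kt}\,\ud k$ and $\frac1t=\int_0^\infty\ue^{-kt}\,\ud k$, together with Fubini's theorem (legitimate because $f$ has compact support and $f(t)/t$ is integrable near $0$ by \eref{135}), turn the first three terms on the right of \eref{143} into $\int_0^\infty(\ln k+2\gamma)\hata f(k)\,\ud k+\tfrac14\hata f(0)$; this is \eref{145}. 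For the oscillatory part, I would use compact support of $f$ to restrict $\int_0^\infty f(t)\Tho(t)\,\ud t$ to a finite interval, on which the asymptotics \eref{28} make the partial sums $\Tho_N(t)$ of \eref{146} converge uniformly, so that summation and integration commute termwise; each resulting term is then converted by the transform identity \eref{147a}, and a further application of Fubini (again by compact support) pulls the $k$-integration outside. Carrying the prefactor $-\tfrac2t$ from \eref{27} through this conversion produces exactly the periodic-orbit term $2\pi\sum_{n=1}^\infty d(n)\int_0^\infty[\tfrac2\pi K_0(4\pi\sqrt{nk})-Y_0(4\pi\sqrt{nk})]\hata f(k)\,\ud k$, and adding this to \eref{145} gives \eref{148}.

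The main obstacle is not any single identity but the chain of interchanges of sums and integrals: at each stage one must verify that the hypotheses on $f$---regulated, $L$-function, compact support, and $f(t)\sim Bt^\beta L(t)$ with $\beta>0$ at the origin---supply the absolute or uniform convergence needed for Fubini and for termwise integration. In particular, the integrability of $f(t)/t$ at $t=0$ is what forces $\beta>0$ here (rather than merely $\beta>-1$ as in theorem \ref{140}), and one needs the uniformity, on the support of $f$, of the large-argument expansion of $\ue^{x}\Ei(-x)+\ue^{-x}\Ei(x)$ used in \eref{28} and \eref{146}. Once these points are settled, the theorem follows simply by concatenating \eref{143}, \eref{145} and \eref{147a}.
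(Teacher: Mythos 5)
Your proposal reproduces the paper's own proof essentially step by step: establish convergence of $\int_0^\infty\Theta(t)f(t)\,\ud t$ and interchange sum and integral to reach \eref{137}, invoke the decomposition \eref{131a}--\eref{27} from theorems \ref{121a} and \ref{125b}, handle the Weyl part via \eref{144}--\eref{145} with Fubini, and handle the oscillatory part via uniform convergence of the partial sums \eref{146} and the Bessel identity \eref{147a}. The remarks you add about why $\beta>0$ (rather than the weaker $\beta>-1$ of theorem \ref{140}) is required, and about the role of the compact support in each Fubini step, correctly pinpoint the nontrivial verifications the paper itself performs.
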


Note that (\ref{148}) is identical to the Vorono\"i summation formula (\ref{13}) if $\tilde{f}(k)$ is renamed $f(k)$.
\section{Concluding remarks}
\label{b1}
The proof we have given in section \ref{109} is one of the shortest proofs of the Vorono\"i summation formula. There exists to our knowledge only one similarly short rigorous proof by Ferrar in \cite{Ferrar:1935,Ferrar:1937}. However, Ferrar used a completely different method which is based on the theory of Mellin transforms and on arguments of complex analysis as e.g. the residuum calculus. 

Our method is the very first one which uses the Laplace transformation to derive the Vorono\"i summation formula. The second crucial ingredient in our method is the Poisson summation formula in theorem \ref{123} which possesses a priori no connection to the Vorono\"i summation formula. Therefore, it is very interesting to provide a link between these two formulae, particularly because both formulae play a very important role in number theory and physics. 

Although the application of a Poisson-like formula (i.e. a Fourier series representation) for proving the Vorono\"i summation formula was first used by Landau \cite{Landau:1927}, we would like to remark that Landau's proof is different and also much longer than our proof presented in this paper. The first one who directly applied the Poisson summation formula in his proof was Hejhal \cite{Hejhal:1979} who applied it to a double sum and then separated terms which hamper the application
of the Poisson summation formula. Hejhal's proof is very sophisticated and totally different to ours.    

Finally, let us compare theorem \ref{147} with the corresponding results in \cite{Wilton:1932,Dixon:1937,Hejhal:1979}, which were derived by more involved and sophisticated methods. In particular, let us compare the function spaces for which the corresponding summations formulae are valid. We recall that the function space which was used in \cite{Hejhal:1979} (see theorem \ref{13a}) is the vector space $C_0^2(\rz)$ consisting of functions which are two times continuously differentiable and possess compact support. In \cite{Wilton:1932,Dixon:1937} the Vorono\"i summation formula is valid for the vector space $\mathcal{F}$ consisting of functions $f$ which possess the following properties: 
\begin{itemize}
 \item $f(k)$ is a real function and is of bounded variation in the interval $(0,k_0)$ for some $k_0>0$ (in \cite{Dixon:1937} it was assumed for convenience that $f(k)$ is continuous at $k=1,2,\ldots$. Furthermore, in \cite{Dixon:1937} there were discussed some modifications concerning the behaviour of $f(k)$ at zero [a logarithmic singularity was involved]).
 \item $ \left(V_{0^+}^x f(k)\right)\ln x\rightarrow 0$ as $x\rightarrow 0^+$, 
 \item for some positive $\kappa$, $k^{\frac{1}{2}+\kappa}f(k)\rightarrow 0$ as $k\rightarrow\infty$,
 \item $f(k)$ is the (indefinite) integral of $f'(k)$ in $k\geq k_0$,
 \item for some positive $\kappa$ 
\begin{equation}
\label{13b}
 \int\limits_{\kappa}^{\infty}k^{\frac{1}{2}+\kappa}\left|f'(k)\right|\ud k<\infty,
\end{equation}
\end{itemize}
where $V_{0^+}^x f(k)$ denotes the total variation of $f(k)$ in $(0,x)$ (see e.g.\cite{Boyarsky1997}).

We denote by $\mathfrak{F}$ the function space consisting of functions $f$ which are regulated $L$-functions possessing compact support and the asymptotics (\ref{135}). In order to compare the function space which is used in the summation formula (\ref{148}) with the above function spaces used in the summation formulae \cite{Wilton:1932,Dixon:1937,Hejhal:1979}, we have to consider the image $\widetilde{\mathfrak{F}}$ with respect to the Laplace transform of $\mathfrak{F}$
\be
\label{b2}
\widetilde{\mathfrak{F}}:=\left\{\hata{f};\quad f\in\mathfrak{F}\right\},
\ee
where $\tilde{f}$ denotes the Laplace transform of $f$. Obviously, $\widetilde{\mathfrak{F}}$ is a vector space. In order to investigate the regularity of the functions $\hata{f}\in\widetilde{\mathfrak{F}}$ we use a theorem of \cite[p.\,43]{Doetsch:1937}.
\begin{theorem}[(Doetsch, \cite{Doetsch:1937})]
Let $f$ be a regulated $L$-function and let the Laplace transform $\hata{f}$ of $f$ possess the half-plane $\re s>\beta$ as domain of definition (the half-plane where the integral in $(\ref{122a})$ converges [see $\cite[p.\,17]{Doetsch:1937}]$). Then the Laplace transform $\hata{f}$ of $f$ is a holomorphic function (in particular infinitely differentiable) in the domain $\re s>\beta$ and the derivatives of $\hata{f}$ are given by
\be
\label{b3}
\hata{f}^{(n)}(s)=(-1)^n\int\limits_{0}^{\infty}\ue^{-ts}t^nf(t)\ud t,\quad \re s>\beta,\quad n\in\nz_0.
\ee
\end{theorem}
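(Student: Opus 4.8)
The plan is to reduce the assertion to the classical fact that a parameter integral with holomorphic integrand which converges \emph{locally uniformly} may be differentiated under the integral sign; the only genuine work is to bridge the gap between convergence and \emph{absolute} convergence of the Laplace integral by an Abel-type integration by parts. First I would split $\hata f(s)=\hata f_0(s)+\hata f_\infty(s)$, where $\hata f_0(s):=\int_0^1\ue^{-ts}f(t)\,\ud t$ and $\hata f_\infty(s):=\int_1^\infty\ue^{-ts}f(t)\,\ud t$. Since $f$ is absolutely integrable near $0$ by the third item of Definition \ref{132} and $|t^n\ue^{-ts}|\le\ue^{|\re s|}$ for $t\in[0,1]$, the Weierstrass $M$-test shows that $\hata f_0$ is \emph{entire} and $\hata f_0^{(n)}(s)=(-1)^n\int_0^1\ue^{-ts}t^nf(t)\,\ud t$ by the elementary rule for differentiating under the sign of a proper integral. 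Hence all the difficulty is concentrated in $\hata f_\infty$.

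For $\hata f_\infty$ the decisive device is the following. Fix a real $\beta'$ with $\beta<\beta'$; by hypothesis the integral $(\ref{122a})$ converges for every $s$ with $\re s>\beta$, in particular at $s=\beta'$, so $\phi(t):=\int_1^t\ue^{-\beta'u}f(u)\,\ud u$ is continuous on $[1,\infty)$, tends to a finite limit as $t\to\infty$, and is therefore bounded, $|\phi|\le M$. Integration by parts (legitimate because $\phi$ is an indefinite integral of a Riemann-integrable function and $\ue^{-(s-\beta')t}$ is $C^1$) gives, for $\re s>\beta'$, the identity $\hata f_\infty(s)=(s-\beta')\int_1^\infty\ue^{-(s-\beta')t}\phi(t)\,\ud t=:(s-\beta')\,\Psi(s)$, the boundary term $\ue^{-(s-\beta')T}\phi(T)$ vanishing as $T\to\infty$ since $\re(s-\beta')>0$. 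Now the integrand of $\Psi$ is dominated by $M\ue^{-\delta t}$ on every half-plane $\re s\ge\beta'+\delta$, so $\Psi$ is the locally uniform limit of the entire functions $s\mapsto\int_1^R\ue^{-(s-\beta')t}\phi(t)\,\ud t$; by Weierstrass's theorem on locally uniform limits of holomorphic functions, $\Psi$ is holomorphic in $\re s>\beta'$ with $\Psi^{(k)}(s)=\int_1^\infty(-t)^k\ue^{-(s-\beta')t}\phi(t)\,\ud t$. Consequently $\hata f_\infty=(s-\beta')\Psi$ is holomorphic in $\re s>\beta'$, and since $\beta'>\beta$ was arbitrary, $\hata f=\hata f_0+\hata f_\infty$ is holomorphic in $\re s>\beta$.

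Next I would establish $(\ref{b3})$, beginning with $n=1$. Differentiating $\hata f_\infty=(s-\beta')\Psi$ gives $\hata f_\infty'=\Psi+(s-\beta')\Psi'$, and carrying out the integration by parts of the previous paragraph in reverse, but with the extra factor arising from $\tfrac{d}{dt}(t\ue^{-(s-\beta')t})$, identifies this expression with $-\int_1^\infty t\,\ue^{-ts}f(t)\,\ud t$ — this last improper integral converging for $\re s>\beta'$ by the same vanishing-boundary-term argument. Adding $\hata f_0'(s)=-\int_0^1 t\,\ue^{-ts}f(t)\,\ud t$ yields $\hata f'(s)=-\int_0^\infty t\,\ue^{-ts}f(t)\,\ud t$ for $\re s>\beta'$, hence for all $\re s>\beta$. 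For the general case I would iterate: $g_1(t):=-t\,f(t)$ is again a regulated $L$-function — regulatedness, Riemann integrability on compact subintervals and absolute integrability near $0$ are immediate, and the tail condition of Definition \ref{132} survives because multiplication by $t$ costs only an arbitrarily small exponential factor ($t\,\ue^{-\eta t}$ being bounded for every $\eta>0$) — its Laplace integral converges on $\re s>\beta$ and equals $\hata f'$ there. Applying the already-proved $n=1$ formula to $g_1$ gives $\hata f''(s)=\int_0^\infty t^2\ue^{-ts}f(t)\,\ud t$, and an easy induction (each $g_n(t):=(-t)^nf(t)$ being a regulated $L$-function with $\hata{g_n}=\hata f^{(n)}$) delivers $\hata f^{(n)}(s)=(-1)^n\int_0^\infty t^n\ue^{-ts}f(t)\,\ud t$ for $\re s>\beta$, which is $(\ref{b3})$.

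The step I expect to be the crux is precisely the one forcing the integration-by-parts regularization: one cannot apply Fubini's theorem, Morera's theorem, or dominated convergence directly to $\int_0^\infty\ue^{-ts}f(t)\,\ud t$, because the abscissa of convergence $\beta$ may lie strictly to the left of the abscissa of \emph{absolute} convergence, so no fixed integrable majorant of $|\ue^{-ts}f(t)|$ is available on all of $\re s>\beta$. Everything rests on the observation that the antiderivative $\phi(t)=\int_1^t\ue^{-\beta'u}f(u)\,\ud u$ is \emph{bounded} — a direct consequence of the assumed convergence of $(\ref{122a})$ at $\beta'$ — which replaces the troublesome improper integral by the genuinely absolutely convergent $\Psi(s)$, to which the classical parameter-integral theorems apply. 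A secondary, purely bookkeeping matter is verifying that $(-t)^nf(t)$ stays inside the class of regulated $L$-functions, so that the induction in the last step is justified.
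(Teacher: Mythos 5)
The paper does not actually prove this statement: it is quoted verbatim from Doetsch's 1937 treatise (the citation to \cite[p.\,43]{Doetsch:1937}) and used as a black box in the concluding remarks, so there is no in-paper proof to compare against. Your reconstruction is correct and is, in substance, the classical argument found in Doetsch and Widder: the splitting at $t=1$, the observation that the only obstacle is the possible gap between the abscissa of convergence $\beta$ and the abscissa of \emph{absolute} convergence, and the Abel-type integration by parts $\hata{f}_\infty(s)=(s-\beta')\int_1^\infty \ue^{-(s-\beta')t}\phi(t)\,\ud t$ with $\phi$ bounded (because the integral converges at $s=\beta'$) are exactly the right devices, and they correctly reduce everything to a locally uniformly absolutely convergent integral to which Weierstrass's theorem applies. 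The details check out: the boundary terms vanish ($\phi(1)=0$ and $\phi$ bounded), the reverse integration by parts using $\frac{\ud}{\ud t}\bigl(t\,\ue^{-(s-\beta')t}\bigr)$ does recover $-\int_1^\infty t\,\ue^{-ts}f(t)\,\ud t$ together with its conditional convergence for $\re s>\beta'$, and the induction via $g_n(t)=(-t)^nf(t)$ is legitimate since these remain regulated $L$-functions with Laplace integrals convergent on $\re s>\beta$. The only cosmetic remark is that for the holomorphy of $\hata{f}_0$ you should phrase the bound $|t^n\ue^{-ts}f(t)|\le C_K|f(t)|$ locally uniformly on compact sets of $s$ rather than as a single global constant, but that is exactly what your estimate gives and does not affect the argument.
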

Since the function space $\mathfrak{F}$ consists of functions possessing compact support and the asymptotics (\ref{135}), we immediately get the following corollary combining (\ref{135}), theorem \ref{140} and (\ref{b3}):
\begin{cor}
\label{b4}
Every function $\hata{f}\in\widetilde{\mathfrak{F}}$ is a regular function (holomorphic on the entire complex plane $\kz$). Furthermore, the derivatives $\hata{f}^{(n)}$ of $\hata{f}\in\widetilde{\mathfrak{F}}$ possess the asymptotics
\be
\label{b5}
\hata{f}^{(n)}(s)=\orr\left(\frac{1}{s^{n+1}}\right),\quad s\in\rz, \quad s  \rightarrow \infty,\quad n\in\nz_0.
\ee
\end{cor}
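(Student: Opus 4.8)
The final statement to prove is Corollary~\ref{b4}, which asserts two things about the Laplace transform $\tilde f$ of a function $f\in\mathfrak F$: (i) $\tilde f$ is entire, and (ii) every derivative $\tilde f^{(n)}$ satisfies $\tilde f^{(n)}(s)=\orr(s^{-n-1})$ as $s\to\infty$ along the reals.

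\medskip

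\noindent\textbf{Proof proposal.}

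\emph{Holomorphy.} The plan is first to note that $f\in\mathfrak F$ has compact support, say $\supp f\subseteq[0,T]$. Hence the defining integral $\tilde f(s)=\int_0^T\ue^{-st}f(t)\,\ud t$ is actually an integral over a finite interval, and $f$ is (being regulated on a compact set) bounded and integrable there, so the integral converges for \emph{every} $s\in\kz$; the abscissa of convergence $\beta$ in the quoted Doetsch theorem may be taken to be $-\infty$. Applying that theorem (with half-plane $\re s>\beta$ replaced by all of $\kz$) gives that $\tilde f$ is holomorphic on $\kz$, i.e.\ entire, and supplies the formula $\tilde f^{(n)}(s)=(-1)^n\int_0^T\ue^{-st}t^nf(t)\,\ud t$. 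One should observe that $t\mapsto t^nf(t)$ is again a regulated $L$-function of compact support, so this formula is legitimate and moreover identifies $\tilde f^{(n)}$ as (up to sign) the Laplace transform of $t^nf(t)$.

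\emph{Asymptotics of the derivatives.} The next step is to extract the behaviour of $t^nf(t)$ as $t\to0^+$ from the hypothesis $f(t)\sim Bt^\beta L(t)$ with $\beta>0$ from \eref{135}. Multiplying by $t^n$ yields $t^nf(t)\sim Bt^{\beta+n}L(t)$ as $t\to0^+$, and since $\beta+n>0>-1$, the function $g_n(t):=t^nf(t)$ satisfies the hypotheses of Theorem~\ref{140} with exponent $\beta+n$. Applying Theorem~\ref{140} to $g_n$ gives
\be
\label{b6}
\tilde f^{(n)}(s)=(-1)^n\,\widetilde{g_n}(s)\sim(-1)^nB\,\frac{\Gamma(\beta+n+1)}{s^{\beta+n+1}}\,L\!\left(\frac1s\right),\quad s\to\infty,\ s\in\rz_{>0}.
\ee
Finally I would invoke \eref{139}, which says $t^\epsilon L(t)\to0$ as $t\to0^+$ for every $\epsilon>0$; taking $t=1/s$ and $\epsilon=\beta$ (permissible since $\beta>0$) gives $s^{-\beta}L(1/s)\to0$ as $s\to\infty$. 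Combining this with \eref{b6},
\be
\tilde f^{(n)}(s)=\Orr\!\left(\frac{1}{s^{\beta+n+1}}\,L\!\left(\tfrac1s\right)\right)=\Orr\!\left(\frac{1}{s^{n+1}}\cdot\frac{L(1/s)}{s^{\beta}}\right)=\orr\!\left(\frac{1}{s^{n+1}}\right),\quad s\to\infty,
\ee
which is exactly \eref{b5}.

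\emph{Main obstacle.} The substantive content is entirely contained in the two Doetsch theorems quoted just above and in Theorem~\ref{140}; the only genuine care needed is the bookkeeping that $g_n(t)=t^nf(t)$ inherits membership in the relevant function class (regulated, $L$-function, compact support) and that its small-$t$ asymptotics has exponent $\beta+n$ still lying in the admissible range $\beta+n>-1$ required by Theorem~\ref{140}. Since $\beta>0$ by assumption \eref{135}, this is automatic, and no delicate estimate beyond \eref{139} is required. Thus I expect the proof to be essentially a two-line deduction once the reduction to $g_n$ is made explicit.
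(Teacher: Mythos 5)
Your proposal is correct and follows essentially the same route as the paper, which merely states that the corollary follows ``immediately'' by combining \eref{135}, theorem \ref{140} and \eref{b3}; you have simply made explicit the natural combination (apply \eref{b3} to identify $\hata{f}^{(n)}$ as $\pm$ the Laplace transform of $t^nf(t)$, apply theorem \ref{140} to that function with exponent $\beta+n$, and finish with \eref{139} to upgrade the $\Orr$ to $\orr$). The holomorphy argument via compact support and the convergence of the Laplace integral on all of $\kz$ is likewise exactly what the paper intends.
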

\hspace{-3.7mm}Therefore, we infer that $\widetilde{\mathfrak{F}}$ is a proper subspace of $\mathcal{F}$ which is used in \cite{Wilton:1932,Dixon:1937}. 

In order to compare the function space $\widetilde{\mathfrak{F}}$ with $C_0^2(\rz)$ which is used in \cite{Hejhal:1979}, we have to check whether it is possible for a function $\hata{f}\in\widetilde{\mathfrak{F}}$ to possess compact support, in particular whether there exists an $s_0>0$ with
\be
\label{b6}
\hata{f}(s)=0, \quad \mbox{for all}\quad s>s_0. 
\ee
But by corollary \ref{b4} and the identity theorem for holomorphic functions (see e.g. \cite{Knopp:1945}) we infer that every function $\hata{f}\in\widetilde{\mathfrak{F}}$ fulfilling (\ref{b6}) must be the zero function. One can show (see e.g. \cite[p.\,34,35]{Doetsch:1937}) that the zero function $\hata{f}\equiv0$ (on $\kz$) must originate from an $L$-function $f$ which is equal to zero except for a set of Lebesgue measure zero. Therefore, we deduce
\be
\label{b7}
\widetilde{\mathfrak{F}}\cap C_0^2(\rz)=\left\{\hata{f}\equiv0\quad \mbox{on} \quad \kz\right\}.
\ee
%
%
%
\newline
%

S.\ E.\ would like to thank the graduate school ``Analysis of complexity, information
and evolution'' of the Land Baden-Württemberg for the stipend which has enabled this paper.
\newline
\appendix
{\small
\bibliographystyle{unsrt}
\bibliography{litver3}}
\parindent0em
\end{document}